\documentclass[aps,prd,reprint,amsmath,amssymb,nofootinbib,floatfix]{revtex4-2}

\usepackage[T1]{fontenc}
\usepackage{amsthm}
\usepackage{bm}
\usepackage{booktabs}
\usepackage{graphicx}
\usepackage{microtype}
\usepackage{xcolor}
\usepackage[colorlinks=true,allcolors=blue]{hyperref}
\graphicspath{{figures/}}

\newtheorem{theorem}{Theorem}
\newtheorem{corollary}[theorem]{Corollary}
\newtheorem{proposition}[theorem]{Proposition}
\newcommand{\pperp}{p_{\perp}}
\newcommand{\mWEC}{m_{\mathrm{WEC}}}
\newcommand{\Rone}{R_{1}}

\begin{document}

\title{Boundary Obstructions and Lapse Freedom in Static Spherical Hollow Cores}

\author{Nelson Bol\'ivar}
\email{nelson.e.bolivar@ucv.ve}
\affiliation{Astrum Drive Technologies, Dallas Pkwy Unit 120 B, Frisco, TX 75034, USA}
\affiliation{Departamento de F\'isica, Facultad de Ciencias, Universidad Central de Venezuela,
Av.\ Los Ilustres, Caracas 1041-A, Venezuela}

\author{Gabriel Abell\'an}
\affiliation{Astrum Drive Technologies, Dallas Pkwy Unit 120 B, Frisco, TX 75034, USA}
\affiliation{Departamento de F\'isica, Facultad de Ciencias, Universidad Central de Venezuela,
Av.\ Los Ilustres, Caracas 1041-A, Venezuela}

\author{Ivaylo Vasilev}
\affiliation{Astrum Drive Technologies, Dallas Pkwy Unit 120 B, Frisco, TX 75034, USA}

\date{August 14, 2026}

\begin{abstract}
We analyze a sharply delimited comparison problem for a static spherical hollow
core: an empty, flat cavity surrounded by a positive matter wall.  In the
unit-lapse, flat-slice radial Painlev\'e--Gullstrand (PG) class the Type-I source
obeys $p_r=-\rho$ and $p_\perp=-\rho-r\rho'/2$.  A regular nonnegative density
that rises out of the cavity must therefore violate the transverse null and
weak energy conditions.  We quantify this boundary obstruction by an exact
weighted onset budget and a depth--width bound, and show that the sharp limit
has the same negative tangential pressure as its symmetry-invariant
regularized Israel layer.  We then close the adjacent unit-lapse,
curved-slice route under a monotone areal-radius hypothesis.  Finally, when
only the lapse is released, an incomplete-beta family gives regular hollow
shells with flat cavities, Schwarzschild exteriors, no thin shells, and
NEC/WEC/SEC/DEC on an explicit compactness interval.  The family is
conditionally characterized in the minimum-degree and $p_r=0$ sectors and
admits a fixed-ADM-mass cavity redshift benchmark.  The result is a static
boundary theorem and construction, not a general hollow-shell existence
theorem, a transport result, or an experimental feasibility claim.
\end{abstract}

\maketitle

\section{Motivation and statement of the problem}

Hollow static configurations---matter walls enclosing an empty, locally flat
cavity---recur across general relativity: gravastars and their anisotropic
generalizations \cite{MazurMottola2004,VisserWiltshire2004,CattoenFaberVisser2005},
Einstein clusters \cite{Einstein1939,Florides1974}, and static Einstein--Vlasov
shells \cite{Rein1994,Rein1999Shell,AndreassonRein2007}.  The structural
question for any such object is what its boundary stresses permit under the
local energy conditions.  This paper establishes a restricted comparison
theorem for that question.  Within the two adjacent static extensions analyzed
here, varying the lapse is sufficient to obtain a regular positive-energy
hollow shell.

Positive-energy hollow and annular configurations are not new in static
spherical general relativity.  They occur in Einstein--Vlasov shell
constructions, gravastar models, thin-shell vacuum-bubble geometries, and
Einstein-cluster-type sources
\cite{Rein1999Shell,HorvatIlijic2007,RosaPiccarra2020,AcunaCardenas2024}.
Our contribution is different: we provide a boundary-complete comparison of
the restricted unit-lapse radial PG class and its two adjacent static
extensions.  The novelty lies in the structural relation between the onset
budget, the junction limit, the curvature-only closure, and the regular lapse
family, rather than in a new general matter model or a new existence theorem
for hollow shells.

The restricted class arises as follows.  Warp-drive research has repeatedly
moved between prescribed metrics and prescribed matter sources.  The original
Alcubierre and Nat\'ario constructions
\cite{Alcubierre1994,Natario2002} prescribe kinematics and infer a source; physical
warp-shell programs instead emphasize positive ADM mass, explicit matter, and
energy-condition diagnostics \cite{BobrickMartire2021,Fuchs2024}.  General
analyses show both that null-energy violation is generic in standard moving
warp fields \cite{Santiago2022} and that many apparently physical results depend
on severe metric and coordinate restrictions \cite{Barzegar2026}.

The authors' earlier piecewise construction used precisely the unit-lapse radial
PG class and showed how weak and null energy conditions could be retained at the
price of non-differentiable interfaces \cite{BolivarAbellanVasilev2025}.  Related
work with nontrivial lapse exhibited a broader anisotropic and heat-flux sector
\cite{AbellanBolivarVasilev2024}.  The present paper revisits the boundary between
those regimes.  Finite-resolution profile tests can hide a localized negative
margin, but they cannot change an exact sign identity.  The appropriate
question is therefore structural:

\begin{quote}
Can a regular density rise from a flat, empty core into a positive matter wall
while preserving NEC and WEC inside the unit-lapse, flat-slice, radial-shift
class?
\end{quote}

The answer is no in this restricted class.  We first quantify the obstruction
and the junction cost of its low-regularity limit.  We then show that curving
the spatial slices at unit lapse still does not help, and finally construct a
smooth positive-energy family after relaxing the lapse alone.  Our contribution
is this boundary-sensitive chain: the integrated cost of a smooth onset, its
Israel-layer counterpart in the sharp limit, the curvature-only closure at unit
lapse, and a regular lapse family with no surface layer.  Positive-energy
static shells are known; we do not claim a new general existence theorem for
them.  Rather, we make this boundary comparison and the lapse-only
construction explicit within one controlled family.

The remainder of the paper supplies two robustness checks rather than new
claims of matter-model priority.  We give the Einstein-cluster interpretation
of the exact family and show that its energy-condition margins survive an open
neighborhood with nonzero radial pressure.  A standard Einstein--Vlasov
constitutive companion and its numerical audit are retained in the
Supplemental Material, where they are explicitly treated as prior-art closure
and not as a new existence or stability theorem.  The onset budget and
boundary analysis complement recent source-first work locating
energy-condition failures at shell boundaries \cite{Le2026}.

\paragraph{Claim boundary.}
The paper proves statements only for the metric classes, regularity hypotheses,
and endpoint conditions stated below.  It does not classify arbitrary static
spherical matter, establish perturbative stability or formation, or identify
the PG shift with transport.  The novelty claim is the closed comparison
between (i) the unit-lapse onset obstruction and its regularized junction
measure, (ii) the unit-lapse curvature-only closure, and (iii) the explicit
lapse-only positive family.  Positive hollow sources, Einstein clusters, and
Einstein--Vlasov shells themselves are prior art.

We use ``warp-drive class'' only as a literature label: the configuration
studied here is static and spherical, its PG shift is a slicing velocity, and
our result concerns this restricted ansatz, not warp drives in general.

\section{Restricted geometry and exact source}

In geometric units $G=c=1$, consider
\begin{equation}
 ds^{2}=-dt^{2}+\left[dr-\beta(r)dt\right]^{2}+r^{2}d\Omega^{2}.
 \label{eq:metric}
\end{equation}
The lapse is unity, the spatial metric is Euclidean in spherical coordinates, and
the radial shift is curl-free.  Define
\begin{equation}
 w(r)\equiv r\beta(r)^{2}.
 \label{eq:wdef}
\end{equation}
A direct Einstein-tensor calculation in the orthonormal frame of the Eulerian
normal $n^{\mu}=(1,\beta,0,0)$ gives
\begin{align}
 8\pi\rho &= \frac{w'}{r^{2}}
 =\frac{\beta(2r\beta'+\beta)}{r^{2}},
 \label{eq:rho}\\
 p_{r} &= -\rho,
 \label{eq:pr}\\
 \pperp &= -\frac{w''}{16\pi r}
 =-\rho-\frac{r}{2}\rho',
 \label{eq:pperp}\\
 q_{r} &=0.
 \label{eq:flux}
\end{align}
The source is diagonal and Hawking--Ellis Type I wherever the classical tensor is
defined \cite{HawkingEllis1973}.  These pressure relations are not new.  In the
spherical vacuum-dark-fluid literature the conservation equations give the same
transverse-pressure identity, and WEC is
known to require a nonincreasing density \cite{Dymnikova1992,Bronnikov2012}.  The
same stress pattern $p_r=-\rho$ underlies gravastar interiors, where the
impossibility of supporting such walls with isotropic pressure is expressed as
a requirement of anisotropic stresses
\cite{MazurMottola2004,VisserWiltshire2004,CattoenFaberVisser2005};
the hollow-onset statements below sharpen that anisotropy requirement into
quantitative onset costs for the PG class.  We
specialize the known structure to a regular hollow PG onset and make its
integrated budget and low-regularity alternatives explicit.

For $\rho>0$, introduce the logarithmic slope
\begin{equation}
 s(r)\equiv-\frac{d\ln\rho}{d\ln r}=-\frac{r\rho'}{\rho}.
\end{equation}
Then $\pperp=(s-2)\rho/2$.  The independent principal energy-condition scalars are
\begin{align}
 \rho+p_{r}&=0,\nonumber\\
 \rho+\pperp&=-\frac{r}{2}\rho'=\frac{s}{2}\rho,
 \label{eq:necmargins}\\
 \rho+p_{r}+2\pperp&=(s-2)\rho,\nonumber\\
 \rho-\lvert\pperp\rvert&\ge0\quad\Longleftrightarrow\quad 0\le s\le4.\nonumber
\end{align}
Thus NEC and WEC require $\rho\ge0$ and $\rho'\le0$; all four standard pointwise
conditions require $\rho\ge0$ and $2\le s\le4$.  The latter slope window is useful
context.  The hollow-core statement below is a direct corollary of the previously
known WEC monotonicity condition, not a new general no-go theorem.

One bookkeeping correction is important.  If the WEC margin is defined as the
minimum principal inequality, then for $\rho>0$
\begin{equation}
 \mWEC
 =\min\{\rho,\rho+p_{r},\rho+\pperp\}
 =\min\left\{0,-\frac{r}{2}\rho'\right\}.
 \label{eq:mwec}
\end{equation}
It is not identically zero: it is saturated for a nonincreasing density and
negative wherever the density rises.  (The slope variable $s$ and
Eq.~\eqref{eq:mwec} are defined only where $\rho>0$; on the zero set of $\rho$
the margins are evaluated directly from Eq.~\eqref{eq:necmargins}.)

\section{Smooth hollow-core obstruction}

\begin{corollary}[Hollow-core onset obstruction]
\label{thm:onset}
Let $\Rone>0$ and let $\rho$ be continuous on $[\Rone,R_{2}]$ and differentiable
on $(\Rone,R_{2})$.  Suppose that
\begin{enumerate}
 \item $\rho(\Rone)=0$,
 \item $\rho(r)\ge0$ on $[\Rone,R_{2}]$, and
 \item $\rho(r_{\star})>0$ for at least one $r_{\star}\in(\Rone,R_{2}]$.
\end{enumerate}
Then the metric class~\eqref{eq:metric} violates the transverse NEC, and hence the
WEC, at some point in $(\Rone,r_{\star})$.
\end{corollary}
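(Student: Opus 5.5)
The argument is a direct application of the mean value theorem to $\rho$ on $[\Rone,r_\star]$, combined with the exact margin identity~\eqref{eq:necmargins}. Because $\rho(r_\star)>0=\rho(\Rone)$ and $\rho$ is continuous on $[\Rone,r_\star]\subset[\Rone,R_2]$ and differentiable on its interior, there is a point $\xi\in(\Rone,r_\star)$ with
\begin{equation}
 \rho'(\xi)=\frac{\rho(r_\star)-\rho(\Rone)}{r_\star-\Rone}=\frac{\rho(r_\star)}{r_\star-\Rone}>0 .
\end{equation}
This needs $r_\star>\Rone$, which the hypothesis $r_\star\in(\Rone,R_2]$ guarantees. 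The interval $(\Rone,r_\star)$ is nonempty, and $\xi$ lies strictly inside $(\Rone,R_2)$, where $\rho$ is differentiable.

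At $\xi$ I will evaluate the transverse null margin. Equation~\eqref{eq:necmargins} gives $\rho+\pperp=-\tfrac{r}{2}\rho'$, and this expression holds wherever $\rho'$ exists, including on the zero set of $\rho$ as noted after Eq.~\eqref{eq:mwec}. Since $\xi>\Rone>0$,
\begin{equation}
 \rho(\xi)+\pperp(\xi)=-\frac{\xi}{2}\rho'(\xi)<0 .
\end{equation}
For a Type-I diagonal source the NEC along the null vector $k\propto e_{\hat t}+e_{\hat\theta}$ reads $T_{\mu\nu}k^\mu k^\nu\propto\rho+\pperp$, so the transverse NEC fails at $\xi$. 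WEC implies NEC because null vectors are limits of timelike ones; equivalently, $\rho+\pperp<0$ is one of the principal WEC inequalities listed in Eq.~\eqref{eq:mwec}. Hence WEC also fails at $\xi$.

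There is one regularity point to handle, and it is the only real obstacle. Equation~\eqref{eq:pperp} is written as $-w''/(16\pi r)$, which suggests a second-derivative requirement on $\beta$. The corollary, however, assumes only that $\rho$ is differentiable. I will therefore interpret the source as defined from $\rho$ through $p_r=-\rho$ and $\pperp=-\rho-\tfrac r2\rho'$. This is legitimate because $8\pi\rho=w'/r^2$ gives $w''=8\pi(r^2\rho)'$, so the two forms of $\pperp$ agree wherever $\rho$ is differentiable, i.e.\ wherever $w''$ exists. With that convention the pointwise evaluation at $\xi$ is meaningful, and no continuity of $\rho'$ is required.

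Hypothesis 2 ($\rho\ge0$) is not needed for the sign argument itself. It serves only to frame the conclusion: the violation cannot be attributed to negative energy density, since $\rho\ge0$ throughout.
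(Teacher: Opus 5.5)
Your proposal is correct and is essentially the paper's proof: the mean value theorem on $[R_1,r_\star]$ gives a point with $\rho'>0$, and the identity $\rho+p_\perp=-r\rho'/2$ from Eq.~\eqref{eq:necmargins} turns this into a strict transverse NEC (hence WEC) violation there. Your extra remarks, that $p_\perp$ only needs $\rho$ to be differentiable and that hypothesis 2 is not used in the sign argument, are accurate but go beyond what the paper's argument requires.
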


\begin{proof}
By the mean value theorem there exists $c\in(\Rone,r_{\star})$ such that
\begin{equation}
 \rho'(c)=\frac{\rho(r_{\star})-\rho(\Rone)}{r_{\star}-\Rone}>0.
\end{equation}
Equation~\eqref{eq:necmargins} then gives
$\rho(c)+\pperp(c)=-c\rho'(c)/2<0$.
\end{proof}

\begin{proposition}[Onset budget]
\label{prop:budget}
If $\rho\in C^{1}([\Rone,b])$ for some $b>\Rone$, then
\begin{equation}
 \mathcal{B}(\Rone,b)
 \equiv 2\int_{\Rone}^{b}\frac{\rho+\pperp}{r}\,dr
 =\rho(\Rone)-\rho(b).
 \label{eq:onsetbudget}
\end{equation}
In particular, if $\rho(\Rone)=0$ and $\rho(b)>0$, then
$\mathcal{B}(\Rone,b)=-\rho(b)<0$ and
\begin{equation}
 \min_{r\in[\Rone,b]}(\rho+\pperp)
 \le -\frac{\rho(b)}{2\ln(b/\Rone)}.
 \label{eq:depthbound}
\end{equation}
\end{proposition}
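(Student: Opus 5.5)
The plan is to substitute the transverse margin from Eq.~\eqref{eq:necmargins} directly into the integrand and apply the fundamental theorem of calculus. Since $\rho+\pperp=-r\rho'/2$ holds pointwise wherever $\rho$ is differentiable, and $\rho\in C^{1}([\Rone,b])$, the integrand becomes
\begin{equation}
 \frac{2(\rho+\pperp)}{r}=-\rho'(r),
\end{equation}
which is continuous on $[\Rone,b]$; the factor $1/r$ causes no trouble because $\Rone>0$. Integrating then gives $\mathcal{B}(\Rone,b)=-\int_{\Rone}^{b}\rho'\,dr=\rho(\Rone)-\rho(b)$. The identity~\eqref{eq:pperp} was stated for smooth enough $\beta$; I will use it in the form $\pperp=-\rho-r\rho'/2$, which needs only $\rho\in C^1$.

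For the special case I set $\rho(\Rone)=0$, so $\mathcal{B}=-\rho(b)<0$. For the depth bound I let $m\equiv\min_{[\Rone,b]}(\rho+\pperp)$. This minimum exists because $\rho+\pperp=-r\rho'/2$ is continuous on the compact interval. Using the pointwise bound $\rho+\pperp\ge m$ and the positivity of the weight $1/r$,
\begin{equation}
 -\rho(b)=2\int_{\Rone}^{b}\frac{\rho+\pperp}{r}\,dr\ \ge\ 2m\int_{\Rone}^{b}\frac{dr}{r}=2m\ln(b/\Rone).
\end{equation}
Since $b>\Rone$, the logarithm is positive, and dividing by $2\ln(b/\Rone)$ gives Eq.~\eqref{eq:depthbound}.

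There is no real obstacle here; the only points to check are regularity and sign. Continuity of $\rho'$ up to the endpoints makes the minimum attained and the integral a proper Riemann integral, which is exactly why the hypothesis is $C^{1}$ rather than the weaker hypothesis of Corollary~\ref{thm:onset}. The sign of $m$ does not affect the inequality direction, because we multiply by the positive quantity $\int dr/r$. The content of the statement is the exact weighted identity, and the weight $1/r$ is chosen precisely to cancel the explicit factor $r$ in the margin.
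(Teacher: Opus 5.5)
Your proposal is correct and follows the paper's proof: you substitute $\rho+\pperp=-r\rho'/2$ so that the weight $2/r$ turns the integrand into $-\rho'$, and you get the depth bound by noting that the minimum cannot exceed the $dr/r$-weighted average. Your regularity remarks (compactness of $[\Rone,b]$, $\Rone>0$, positivity of $\ln(b/\Rone)$) fill in the details the paper leaves implicit.
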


\begin{proof}
Substitution of Eq.~\eqref{eq:necmargins} into the integral gives
$\mathcal{B}=-\int_{\Rone}^{b}\rho'\,dr$, proving
Eq.~\eqref{eq:onsetbudget}.  Equation~\eqref{eq:depthbound} follows because the
minimum of $\rho+\pperp$ cannot exceed its weighted average with measure
$dr/r$.  For $b-\Rone\ll\Rone$, the bound scales as
$-\rho(b)\Rone/[2(b-\Rone)]$.
\end{proof}

\begin{corollary}[Regular flat-core shift]
If $\beta$ is sufficiently regular for the classical Einstein tensor,
$\beta=0$ on $r\le\Rone$, and the resulting $\rho$ is differentiable outside the
core, then a nontrivial nonnegative matter wall cannot satisfy WEC everywhere.
\end{corollary}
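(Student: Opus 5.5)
The plan is to reduce the statement to Corollary~\ref{thm:onset}. The only work is to check that a regular flat-core shift produces a density satisfying its three hypotheses.

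First, fix the wall. Nontrivial means $\rho\not\equiv0$ on $r>\Rone$, so I pick $r_\star>\Rone$ with $\rho(r_\star)>0$ and set $R_2=r_\star$. Nonnegative gives hypothesis~2 on $[\Rone,R_2]$. Differentiability on $(\Rone,R_2)$ is assumed in the statement.

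Second, verify $\rho(\Rone)=0$ and continuity at $\Rone$. This is the step I expect to need care, because the statement only says $\beta$ is regular enough for the classical Einstein tensor. I would take that to mean $\beta\in C^2$ across $r=\Rone$, so that $w=r\beta^2$ is $C^2$ and $\rho=w'/(8\pi r^2)$ is $C^1$, in particular continuous. Since $\beta\equiv0$ on $r\le\Rone$, Eq.~\eqref{eq:rho} gives $\rho\equiv0$ there. Continuity then forces $\rho(\Rone)=0$, and $\rho$ is continuous on $[\Rone,R_2]$.

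A weaker reading is $\beta\in C^1$ with $w''$ existing. Even then I can argue directly. We have $\beta(\Rone)=0$, so $\beta=o(1)$ and $\beta'$ is bounded near $\Rone$, hence $\rho=\beta(2r\beta'+\beta)/(8\pi r^2)\to0$ as $r\downarrow\Rone$. That is all the mean-value argument needs.

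Finally, apply Corollary~\ref{thm:onset}. It yields $c\in(\Rone,r_\star)$ with $\rho'(c)>0$, hence $\rho(c)+\pperp(c)=-c\rho'(c)/2<0$ by Eq.~\eqref{eq:necmargins}. So the transverse NEC fails at $c$, and therefore WEC fails there too. Equivalently, under $C^1$ regularity Proposition~\ref{prop:budget} gives the quantitative version directly: $\min(\rho+\pperp)\le-\rho(r_\star)/[2\ln(r_\star/\Rone)]<0$.
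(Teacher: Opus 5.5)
Your proposal is correct and matches the paper's argument: the paper likewise notes that a regular match has $\beta(\Rone)=0$ with bounded $\beta'$, so Eq.~\eqref{eq:rho} gives $\rho(\Rone)=0$, and then invokes Corollary~\ref{thm:onset} at a point where the exterior density is positive. Your continuity argument under the $C^{2}$ reading and the quantitative bound via Proposition~\ref{prop:budget} are harmless refinements of the same route.
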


Indeed, a regular match has $\beta(\Rone)=0$ with bounded $\beta'$, so
$\beta\beta'\rightarrow0$ and Eq.~\eqref{eq:rho} gives $\rho(\Rone)=0$.
Corollary~\ref{thm:onset} applies as soon as the exterior density becomes positive.  Equivalently,
WEC forces $\rho$ to be nonincreasing; a nonnegative function cannot start at zero,
decrease, and later become positive.

Figure~\ref{fig:onset-obstruction} compares a $C^{2}$ Hermite-quintic
(``smootherstep'') onset followed by a decaying
tail with a filled monotone profile.  The hollow profile has positive energy
density everywhere in its wall, yet it necessarily develops a negative transverse
null margin while turning on.  The filled profile avoids the deficit because it
starts positive at the center and decreases outward.  The figure illustrates the
corollary; it is not numerical evidence for it.

\begin{figure}[t]
 \includegraphics[width=0.92\linewidth]{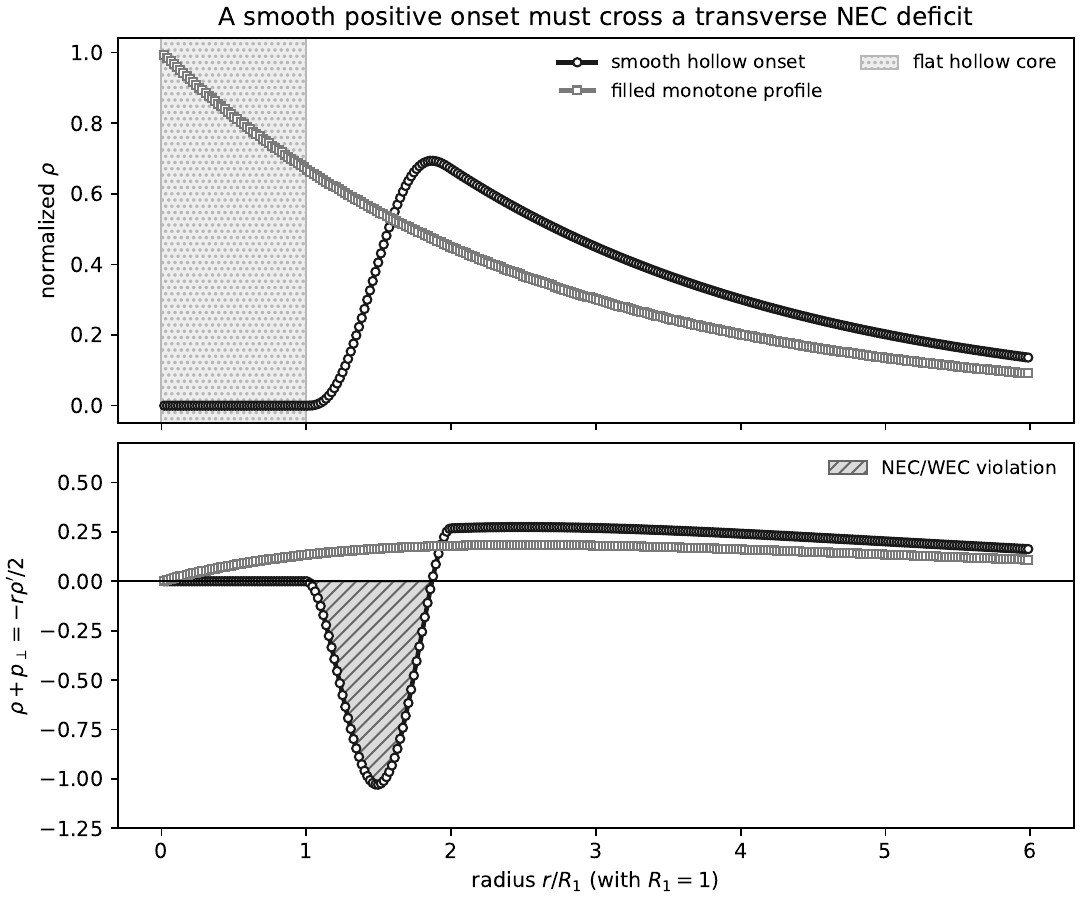}
 \caption{Normalized profiles.
 A smooth hollow source (solid) rises from zero and therefore has
 $\rho+\pperp<0$ in part of the onset.  A filled monotone source (dashed)
 satisfies the transverse NEC.  The shaded core is empty for the hollow
 profile.}
 \label{fig:onset-obstruction}
\end{figure}

\section{The boundary-regularity trilemma}

The corollary assumes a regular density.  Could one turn on the wall discontinuously?
The exact mass variable exposes the price.  Near $r=\Rone$, write
\begin{equation}
 \beta(r)\sim A(r-\Rone)^{n},\qquad A>0.
 \label{eq:poweronset}
\end{equation}
For $r>\Rone$,
\begin{equation}
 \rho(r)\sim\frac{nA^{2}}{4\pi\Rone}
 (r-\Rone)^{2n-1}.
 \label{eq:rhopower}
\end{equation}
The three possibilities are collected in Table~\ref{tab:trilemma}.

\begin{table}[t]
\caption{Onset taxonomy for shifts with power-law leading order
$\beta\sim A(r-\Rone)^{n}$.}
\label{tab:trilemma}
\footnotesize
\begin{ruledtabular}
\begin{tabular}{lll}
Exponent & Density onset & Consequence \\
\hline
$n>1/2$ & $\rho\to0$ and rises & NEC/WEC deficit \\
$n=1/2$ & finite positive step & $\beta'\sim(r-\Rone)^{-1/2}$ \\
$n<1/2$ & $\rho$ diverges & singular source \\
\end{tabular}
\end{ruledtabular}
\end{table}

A genuinely $C^{1}$ match to the flat core requires $n>1$, and therefore lies
unambiguously in the first row.  The square-root onset often used to generate a
finite positive wall density lies in the second row: $\beta$ is continuous but not
$C^{1}$.  It must not be described as a smooth junction.

The reduced equations first suggest a formal distributional diagnostic.  Used
alone, it would manipulate distributions inside nonlinear curvature expressions
outside the Geroch--Traschen class \cite{GerochTraschen1987}.  We therefore state
it here only to identify the candidate coefficient, and justify that coefficient
below as the weak limit of smooth geometries.  Let
\begin{equation}
 \rho=H(r-\Rone)\rho_{+}(r),\qquad \rho_{+}(\Rone)>0,
 \label{eq:step}
\end{equation}
where $H$ is the Heaviside distribution.  Then
\begin{equation}
 \rho'=H\rho_{+}'+\rho_{+}(\Rone)\delta(r-\Rone),
\end{equation}
and the reduced identity~\eqref{eq:necmargins} contains
\begin{equation}
 (\rho+\pperp)_{\mathrm{surface}}
 =-\frac{\Rone}{2}\rho_{+}(\Rone)\delta(r-\Rone).
 \label{eq:surfacenec}
\end{equation}
Equivalently, $w'$ jumps by $8\pi\Rone^{2}\rho_{+}(\Rone)$ and
$\pperp=-w''/(16\pi r)$ carries the same negative surface coefficient.

The same one-sided boundary data admit a rigorous thin-shell completion in
static diagonal coordinates.  In each open region with $|\beta|<1$, the change
$dT=dt+\beta(1-\beta^2)^{-1}dr$ gives
\begin{equation}
 ds^2=-f(r)dT^2+\frac{dr^2}{f(r)}+r^2d\Omega^2,
 \qquad f=1-\beta^2=1-\frac{w}{r}.
 \label{eq:diagonal-pg}
\end{equation}
Treating the two sufficiently regular diagonal sides as an Israel junction,
use one normal pointing from the cavity toward increasing $r$.  At
$r=\Rone$, where $f_-=f_+=1$, the surface tensor is
\begin{equation}
 \sigma=0,
 \qquad
 P_s=\frac{[f']}{16\pi}
 =-\frac{w'_+(\Rone)}{16\pi\Rone}
 =-\frac{\Rone}{2}\rho_+(\Rone)<0 .
 \label{eq:israel-onset}
\end{equation}

The low regularity of the limiting PG chart can be handled without applying
this time change directly to a distribution.  The appropriate statement is an
association theorem for smooth geometries.

\begin{proposition}[Regularized PG--Israel association]
\label{prop:pg-israel-association}
Let $w=0$ for $r<\Rone$ and $w=w_+(r)$ for $r>\Rone$, where $w_+$ is smooth
for $r>\Rone$, has a $C^2$ one-sided extension, $w_+(\Rone)=0$, and
$J\equiv w_+'(\Rone)>0$.  Assume $f=1-w/r$ has a positive lower bound in a
collar of the junction.  Let $g_\epsilon$ be any smooth, horizon-free PG family
of the form~\eqref{eq:metric}, with
$w_\epsilon=r\beta_\epsilon^2$, such that
\begin{equation}
 \begin{split}
 w_\epsilon&\longrightarrow w\quad\hbox{in }W^{1,1}_{\mathrm{loc}},
 \qquad \beta_\epsilon\longrightarrow\beta\quad\hbox{locally uniformly},\\
 &\sup_\epsilon \operatorname{TV}_K(w_\epsilon')<\infty
 \end{split}
 \label{eq:admissible-regularization}
\end{equation}
for every compact collar $K$.  Writing
$\boldsymbol G_\epsilon{}^\mu{}_{\nu}
\equiv G^\mu{}_{\nu}[g_\epsilon],dV_{g_\epsilon}$, these tensor-valued Radon
measures converge weakly.  Their singular part is independent of the admissible
regularization and is
\begin{align}
 \left(\boldsymbol G^\mu{}_{\nu}\right)_{\mathrm{sing}}
 &=8\pi P_s\,\Pi^\mu{}_{\nu}\,\delta_\Sigma,
 \label{eq:associated-einstein-measure}\\
 P_s&=-\frac{J}{16\pi\Rone}
 =-\frac{\Rone}{2}\rho_+(\Rone).\notag
\end{align}
where $\Pi^\mu{}_{\nu}$ projects onto the two angular directions and
$\delta_\Sigma$ is the invariant measure on $r=\Rone$.  Thus the associated
surface tensor has $\sigma=0$ and equals the Israel tensor in
Eq.~\eqref{eq:israel-onset}.
\end{proposition}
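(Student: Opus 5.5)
The plan is to exploit the fact that, for the unit-lapse PG family~\eqref{eq:metric}, the mixed Einstein tensor depends \emph{linearly} on $w=r\beta^2$ and its first two radial derivatives, and that the volume form does not depend on $\beta$ at all. Together these reduce the nonlinear distributional problem to a linear statement about derivatives of $w_\epsilon$.

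\textbf{Step 1: linearity.} First I would record the coordinate components of $G^\mu{}_\nu[g]$ for smooth $\beta$. The determinant is $\det g=-r^4\sin^2\theta$ independently of $\beta$, so
\[
dV_{g_\epsilon}=r^2\sin\theta\,dt\,dr\,d\theta\,d\phi
\]
for every $\epsilon$. From Eqs.~\eqref{eq:rho}--\eqref{eq:flux}, the source in the Eulerian frame is $\mathrm{diag}(\rho,-\rho,\pperp,\pperp)$ with $q_r=0$. Because $p_r=-\rho$, the $(t,r)$ block of the mixed tensor $T^\mu{}_\nu$ is $-\rho\,\delta^a{}_b$, and this is frame independent within that block. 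I would check this directly from the metric. The result is
\[
G^t{}_t=G^r{}_r=-\frac{w'}{r^2},\qquad G^t{}_r=G^r{}_t=0,\qquad G^\theta{}_\theta=G^\phi{}_\phi=-\frac{w''}{2r}.
\]
So $\boldsymbol G_\epsilon$ is a linear expression in $w_\epsilon'$ and $w_\epsilon''$ multiplied by fixed smooth weights. This is the key structural point: it places the problem inside a setting where weak limits commute with the curvature map. I expect verifying this cleanly, in particular the vanishing of the off-diagonal mixed components and the absence of any $\beta\beta'$ cross terms, to be the main computational obstacle. I would carry it out once in the orthonormal Eulerian frame and transform back with the explicit tetrad, which has unit lapse.

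\textbf{Step 2: weak convergence.} Fix a test tensor field $\varphi$ with compact support in $\mathbb R_t\times K\times S^2$.
\begin{itemize}
\item \emph{The $w'$ terms.} These converge because $w_\epsilon'\to w'$ in $L^1_{\mathrm{loc}}$. They contribute only the absolutely continuous part $-w'/r^2$, which is bounded near the junction since $w_+$ has a $C^2$ one-sided extension.
\item \emph{The $w''$ terms.} The measures $w_\epsilon''\,dr$ have uniformly bounded total variation on $K$ by~\eqref{eq:admissible-regularization}. They therefore form a weak-$*$ precompact family. Any limit point must equal the distributional derivative $D(w')$, since $w_\epsilon'\to w'$ in $L^1$, so the whole family converges.
\item \emph{Identifying the limit.} Since $w=0$ inside and $w=w_+$ outside, with $w_+'(\Rone)=J$, we get
\[
D(w')=H(r-\Rone)\,w_+''\,dr+J\,\delta(r-\Rone)\,dr.
\]
\end{itemize}
This gives weak convergence of $\boldsymbol G_\epsilon$. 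It also shows that the limit, and hence its singular part, depends only on $w$ and not on the admissible regularization.

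\textbf{Step 3: normalizing the singular part.} The singular part is purely angular:
\[
-\frac{J}{2\Rone}\,\Pi^\mu{}_\nu\,\delta(r-\Rone)\,r^2\sin\theta\,dt\,dr\,d\theta\,d\phi .
\]
To express this with respect to the invariant hypersurface measure $\delta_\Sigma$, I would use $g^{rr}=1-\beta^2=f$, so the unit normal has $|dr|=\sqrt f$, and $\delta_\Sigma=\sqrt f\,\delta(r-\Rone)\,dV$. At the junction $f=1$, because $\beta(\Rone)=0$ by locally uniform convergence and $w_+(\Rone)=0$. The positive lower bound on $f$ guarantees that the hypersurface is timelike and that $\delta_\Sigma$ is well defined in the collar.

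Comparing the two expressions yields $8\pi P_s=-J/(2\Rone)$, that is, $P_s=-J/(16\pi\Rone)$. Using $8\pi\rho_+=w_+'/r^2$ from Eq.~\eqref{eq:rho} at $r=\Rone$ turns this into $-\tfrac{\Rone}{2}\rho_+(\Rone)$. The $tt$ component has no delta, so $\sigma=0$, and the result matches Eq.~\eqref{eq:israel-onset}.
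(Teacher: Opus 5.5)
Your proposal is correct and follows essentially the same route as the paper. Both arguments use the exact form of the mixed PG Einstein tensor, which is linear in $w_\epsilon'$ and $w_\epsilon''$ with a $\beta$-independent volume density; both get the $w'$ terms from $L^1_{\mathrm{loc}}$ convergence, use the variation bound plus uniqueness of the distributional derivative to obtain $w_\epsilon''\,dr \rightharpoonup H(r-\Rone)\,w_+''\,dr + J\,\delta(r-\Rone)$, and use $f(\Rone)=1$ to identify $\delta(r-\Rone)$ with $\delta_\Sigma$. The paper additionally exhibits an explicit admissible regularization to show the class is nonempty, and notes the per-$\epsilon$ diffeomorphism to the diagonal chart; neither is needed for the statement itself, which your coefficient comparison with Eq.~\eqref{eq:israel-onset} already settles.
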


\begin{proof}
For every $\epsilon>0$, a direct calculation in the off-diagonal PG chart gives
the mixed components
\begin{align}
 G^t{}_t=G^r{}_r&=-\frac{w_\epsilon'}{r^2}, &
 G^\theta{}_\theta=G^\phi{}_\phi&=-\frac{w_\epsilon''}{2r},
 \label{eq:pg-mixed-einstein}\\
 G^\mu{}_{\nu}&=0\quad(\mu\ne\nu).\notag
\end{align}
Thus the potentially singular expressions combine into derivatives of the
invariant mass variable before the limit is taken; no off-diagonal or additional
nonlinear concentration remains.  The $W^{1,1}$ convergence makes
$\rho_\epsilon$ and $p_{r,\epsilon}=-\rho_\epsilon$ converge locally in
$L^1$, so neither can acquire a delta term.  The uniform variation bound and
the uniqueness of the distributional derivative give
\begin{equation}
 w_\epsilon''\,dr\stackrel{*}{\rightharpoonup}
 H(r-\Rone)w_+''\,dr+J\,\delta(r-\Rone).
 \label{eq:w-second-weak-limit}
\end{equation}
Equation~\eqref{eq:pperp} therefore gives the coefficient in
Eq.~\eqref{eq:associated-einstein-measure}.  The volume density is
$\sqrt{-g_\epsilon}=r^2\sin\theta$, while the angular projector is smooth and
independent of the PG time slicing.  Pairing with an arbitrary compactly
supported smooth test tensor consequently yields the displayed tensor measure;
the remaining PG components contain only bounded factors multiplying
$\rho_\epsilon$ and $p_{r,\epsilon}$ and add no singular part.  At the junction
$f(\Rone)=1$, so proper normal distance obeys $d\ell=dr$ and
$\delta_\Sigma$ has the same radial coefficient as $\delta(r-\Rone)$.

This argument is carried out in the PG chart itself and does not push a
distribution through the limiting time map.  For each fixed $\epsilon$, the
smooth diffeomorphism
$dT_\epsilon=dt+\beta_\epsilon f_\epsilon^{-1}dr$ gives exactly the diagonal
metric~\eqref{eq:diagonal-pg}.  Hence the two descriptions are identical before
the weak limit is taken, and the limiting angular measure is precisely the
Israel layer.  Finally, both $r$ and $w=2m_{\rm MS}$ are invariant scalars in
spherical symmetry.  Conditions~\eqref{eq:admissible-regularization} therefore
select the geometry rather than an arbitrary radial component smoothing, and
the limit depends only on $J=[w']$, not on the mollifier.

The admissible class is nonempty and can preserve the exact vacuum core.  Let
$\chi\in C^\infty(\mathbb R)$ be a flat step with $\chi(u)=0$ for $u\leq0$
and $\chi(u)=1$ for $u\geq1$.  On the exterior sign branch set
\begin{equation}
 \beta_\epsilon(r)=
 \chi\!\left(\frac{r-\Rone}{\epsilon}\right)
 \sqrt{\frac{w_+(r)}{r}},
 \label{eq:explicit-pg-regularization}
\end{equation}
and set $\beta_\epsilon=0$ in the cavity.  Flatness of $\chi$ removes the
square-root corner, while the family agrees with the target for
$r\geq\Rone+\epsilon$ and satisfies
Eq.~\eqref{eq:admissible-regularization}.  A further ordinary smoothing away
from the collar, if needed, does not alter the shell limit.
\end{proof}

Thus $\sigma+P_s<0$: this diagonal thin-shell completion violates its
intrinsic null and weak energy conditions.  Its coefficient agrees with the
surface term in Eq.~\eqref{eq:surfacenec}, since
$P_s\delta(\ell)=P_s\delta(r-\Rone)$ when $f(\Rone)=1$.  For
$\beta\sim A\sqrt{r-\Rone}$ one has
$\rho_+=A^2/(8\pi\Rone)$ and $P_s=-A^2/(16\pi)$.  In this limit the original
PG metric is only $C^{0,1/2}$ at the onset, while the time transformation above
is $C^1$ but not $C^{1,1}$.  Indeed,
$\beta'^2\sim J/[4\Rone(r-\Rone)]$, so the raw metric does not satisfy the
local $L^2$ connection regularity used in the standard distributional-curvature
framework \cite{GerochTraschen1987,LeFlochMardare2007}.  Proposition
\ref{prop:pg-israel-association} establishes the associated Einstein measure;
it does not claim that the unsmoothed PG metric has a classical distributional
Riemann tensor.  Defining the latter would require a nonlinear generalized-
function framework \cite{SteinbauerVickers2006}.  Reversing the normal reverses
both the jump convention and the displayed extrinsic curvatures, but not the
physical surface tensor
\cite{Israel1966,GerochTraschen1987,Racsko2022}.  The classical onset
alternatives are:
Corollary~\ref{thm:onset} covers every differentiable onset regardless of
leading order, Table~\ref{tab:trilemma} organizes the power-law leading
orders, and Eq.~\eqref{eq:israel-onset} supplies the corresponding diagonal
thin-shell benchmark.  Within the restricted class, a positive hollow onset is
therefore smooth but NEC-violating, finite but non-$C^1$ and associated through
Proposition~\ref{prop:pg-israel-association} with a negative junction pressure,
or singular.

\section{How the onset bound controls profile design}

One might try to evade the corollary by making the onset very thin.  The
integrated identity shows why this cannot work.  At fixed wall density,
narrowing the rising interval cannot remove the onset cost, because it cannot
change the sign identity $\rho+\pperp=-r\rho'/2$ or the integrated
budget~\eqref{eq:onsetbudget}.  At fixed wall density, the minimum transverse
margin therefore becomes at least as negative as
Eq.~\eqref{eq:depthbound}; in the discontinuous limit the same budget is
represented by the negative surface term in Eq.~\eqref{eq:israel-onset}.  A
finite-resolution plot may fail to display a localized deficit, but no choice
of sampling changes the sign identity or its integral.
Numerical profiles are useful here as illustrations of how the deficit is
distributed, while admissibility is settled by the exact identities above.  The
same distinction applies to the phrase ``positive energy'': a profile with
$\rho\ge0$ has nonnegative
energy density in the Type-I rest/Eulerian frame, which is not WEC unless the
principal sums are also nonnegative; recent irrotational constructions draw
the same distinction \cite{Rodal2026}.

\section{Curved slices at unit lapse do not evade the obstruction}
\label{sec:curvature-only}

One might try to preserve unit lapse but curve the spatial slices.  Within the
class of static spherical unit-lapse metrics whose areal radius is a global
monotone coordinate on the slice, the general radial ansatz is
\begin{equation}
 ds^2=-dt^2+B(r)^2[dr-\beta(r)dt]^2+r^2d\Omega^2,
 \qquad B(r)>0 .
 \label{eq:curved-slice-metric}
\end{equation}
Let $n^\mu=(1,\beta,0,0)$ be the future unit normal and
$e_{\hat r}^{\mu}=(0,B^{-1},0,0)$ the outward radial unit vector.  The two
future radial null vectors are $k_\pm^\mu=n^\mu\pm e_{\hat r}^{\mu}$.  Direct
contraction of the Einstein tensor gives
\begin{align}
 8\pi T_{\mu\nu}k_+^\mu k_+^\nu
 &=\frac{2B'}{rB^3}(1+B\beta)^2, \label{eq:null-plus-B}\\
 8\pi T_{\mu\nu}k_-^\mu k_-^\nu
 &=\frac{2B'}{rB^3}(1-B\beta)^2. \label{eq:null-minus-B}
\end{align}
Their sum removes every possible zero of an individual square:
\begin{equation}
 8\pi T_{\mu\nu}(k_+^\mu k_+^\nu+k_-^\mu k_-^\nu)
 =\frac{4B'}{rB^3}(1+B^2\beta^2).
 \label{eq:null-sum-B}
\end{equation}
Equations~\eqref{eq:null-plus-B}--\eqref{eq:null-minus-B} follow directly
from the radial and angular components of the Einstein tensor.

\begin{theorem}[Curvature-only closure in the monotone branch]
\label{thm:curvature-only}
Assume Eq.~\eqref{eq:curved-slice-metric} is regular for $r\geq\Rone$ with
$B\in C^1$, obeys
the radial NEC in both directions, has a flat inner boundary
$B(\Rone)=1$, and is asymptotically flat with
$\lim_{r\to\infty}B(r)=1$.  Then $B(r)\equiv1$ for all
$r\geq\Rone$.
\end{theorem}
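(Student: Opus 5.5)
The plan is to turn the two radial null contractions into a sign condition on $B'$ and then squeeze $B$ between its boundary values. I would use the summed identity Eq.~\eqref{eq:null-sum-B} rather than either contraction alone, because each square $(1\pm B\beta)^2$ may vanish somewhere, while their sum never does.

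\emph{Step 1 (sign of $B'$).} The radial NEC in both directions means $T_{\mu\nu}k_\pm^\mu k_\pm^\nu\ge0$ for both null vectors. Adding the two inequalities and using Eq.~\eqref{eq:null-sum-B} gives
\[
\frac{4B'}{rB^3}\,(1+B^2\beta^2)\ge0 .
\]
On $r\ge\Rone>0$ we have $r>0$ and $B>0$, and $1+B^2\beta^2\ge1$. Hence $B'(r)\ge0$ at every $r\ge\Rone$.

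\emph{Step 2 (squeeze).} Since $B\in C^1$ and $B'\ge0$, the function $B$ is nondecreasing on $[\Rone,\infty)$. For every $r\ge\Rone$ this gives
\[
1=B(\Rone)\le B(r)\le\lim_{s\to\infty}B(s)=1 ,
\]
where the final limit exists by monotonicity and equals $1$ by asymptotic flatness. Hence $B\equiv1$.

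\emph{Main obstacle.} Steps 1 and 2 are elementary once Eqs.~\eqref{eq:null-plus-B}--\eqref{eq:null-minus-B} are granted. The real content is therefore the verification of those contractions. I would compute $G_{\mu\nu}$ for Eq.~\eqref{eq:curved-slice-metric} and check that the combination $G_{\mu\nu}k_\pm^\mu k_\pm^\nu$ reduces exactly to $\pm$-symmetric squares multiplying $2B'/(rB^3)$, with no residual term in $\beta'$. The cleanest route is probably the coordinate change $dT=dt+B\beta\,\dots$ that brings the metric to a static form with $g_{rr}$ built from $B^2$ and $1-B^2\beta^2$. In that form $G^{\hat t}{}_{\hat t}-G^{\hat r}{}_{\hat r}$ is proportional to the standard $B'$ term, and the Lorentz boost between the two frames produces the factors $(1\pm B\beta)^2$.

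Two points need care here. The first is the regularity of $B\in C^1$: the identity holds pointwise wherever the classical Einstein tensor is defined. If $\beta$ is less regular, I would note that only $B'$ enters the sign argument. The second is the possibility of a horizon where $B\beta=\pm1$. There one square vanishes, but the summed identity still yields $B'\ge0$, so no case split is needed.
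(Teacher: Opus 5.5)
Your proposal is correct and is essentially the paper's own proof: add the two radial null contractions to obtain Eq.~\eqref{eq:null-sum-B}, conclude $B'\ge0$ from $r>0$, $B>0$, $1+B^2\beta^2>0$, and squeeze the nondecreasing $B$ between $B(\Rone)=1$ and its limit $1$. Your side check of the contractions via the static chart $-f\,dT^2+B^2f^{-1}dr^2$ with $f=1-B^2\beta^2$ does work: there $8\pi(\rho+p_r)=2fB'/(rB^3)$, and the radial boost with speed $B\beta$ multiplies this by $(1\pm B\beta)/(1\mp B\beta)$. That check, however, covers only the region $|B\beta|<1$ and needs an algebraic-identity extension beyond it, whereas the paper's direct PG-chart computation requires no such step.
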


\begin{proof}
Since $r>0$, $B>0$, and $1+B^2\beta^2>0$, the NEC and
Eq.~\eqref{eq:null-sum-B} imply $B'\geq0$.  Thus $B$ is nondecreasing.
A nondecreasing function beginning at one and tending to one at infinity is
identically one.
\end{proof}

Two scope remarks are required.  First, the theorem is independent of the
detailed shift profile: a compact spatial curvature bump must return to $B=1$
and therefore has a region with $B'<0$, where at least one radial null
contraction is negative.  Second, Eq.~\eqref{eq:curved-slice-metric} presumes
that the areal radius is a global monotone coordinate; static slices whose
sphere-area function is non-monotone (throat- or neck-type geometries, with
$R'=0$ somewhere) lie outside the ansatz, and the theorem does not decide
them.  This is a closure result for the stated unit-lapse,
monotone-areal-radius branch, not a classification of all static spherical
geometries.  Within monotone-areal-radius static spherical symmetry, spatial
curvature by itself is therefore not an escape from the unit-lapse hollow-core
obstruction.

\section{A regular positive-energy family from lapse freedom}
\label{sec:escape}

The preceding results isolate the next single freedom in this sequence: the
lapse.  We now give a family rather than a single favorable profile.  In
diagonal static
coordinates let
\begin{equation}
 ds^2=-e^{2\Phi(r)}d\tau^2+\frac{dr^2}{1-2m(r)/r}+r^2d\Omega^2 .
 \label{eq:positive-shell-metric}
\end{equation}
Fix a wall aspect ratio $\eta>1$ and an integer $n\geq1$.  With
$x=(r/L-1)/(\eta-1)$ define the regularized-beta profile
\begin{align}
 F_n(x)&=I_x(n+1,n+1)
 =\frac{\int_0^x u^n(1-u)^n du}{\mathrm{B}(n+1,n+1)},
 \label{eq:beta-family}\\
 F_n'(x)&=\frac{x^n(1-x)^n}{\mathrm{B}(n+1,n+1)} .
 \label{eq:beta-family-derivative}
\end{align}
The function rises monotonically from zero to one, while its first $n$
derivatives vanish at both endpoints.  It therefore turns on the mass across
the wall with controlled endpoint regularity.  The following proposition
states the precise, limited sense in which this familiar profile is the
simplest choice.

\begin{proposition}[Minimal-degree profile characterization]
\label{prop:profile-uniqueness}
Among polynomials $P$ of degree at most $2n+1$, the endpoint data
\begin{align}
 P(0)&=0,\qquad P(1)=1,\nonumber\\
 P^{(k)}(0)&=P^{(k)}(1)=0\qquad(1\leq k\leq n)
 \label{eq:profile-jets}
\end{align}
have the unique solution $P=F_n$.
\end{proposition}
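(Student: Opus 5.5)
We prove existence and uniqueness separately; the only computation is a count of how many conditions the endpoint data~\eqref{eq:profile-jets} impose on $P'$.

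\emph{Existence.} We check that $F_n$ satisfies the data. By Eq.~\eqref{eq:beta-family-derivative}, $F_n'$ is a constant multiple of $x^n(1-x)^n$, which has degree $2n$. Hence $F_n$ is a polynomial of degree $2n+1$. The integral in Eq.~\eqref{eq:beta-family} starts at $0$, so $F_n(0)=0$. The normalization by $\mathrm{B}(n+1,n+1)=\int_0^1u^n(1-u)^n\,du$ gives $F_n(1)=1$. Because $x=0$ is a root of $F_n'$ of multiplicity $n$, the derivatives $F_n^{(k)}(0)=(F_n')^{(k-1)}(0)$ vanish for $1\le k\le n$. The same argument at $x=1$ gives $F_n^{(k)}(1)=0$ for $1\le k\le n$.

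\emph{Uniqueness.} Let $P$ be any polynomial of degree at most $2n+1$ that satisfies~\eqref{eq:profile-jets}. Then $Q\equiv P'$ has degree at most $2n$.

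The conditions $P^{(k)}(0)=0$ for $1\le k\le n$ say that $Q^{(j)}(0)=0$ for $0\le j\le n-1$, so $x^n$ divides $Q$. In the same way, $(1-x)^n$ divides $Q$.

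These two factors are coprime, so their product, which has degree $2n$, divides $Q$. It follows that $Q=c\,x^n(1-x)^n$ for some constant $c$.

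Integrating and using $P(0)=0$ gives
\[
P(x)=c\int_0^x u^n(1-u)^n\,du .
\]
The condition $P(1)=1$ then fixes $c=1/\mathrm{B}(n+1,n+1)$. This is nonzero and finite because the integrand is positive on $(0,1)$. Therefore $P=F_n$.

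An equivalent check is to count conditions. There are $2n+2$ linear conditions on the $2n+2$ coefficients of $P$. Taking the data to be zero, the divisibility argument forces $Q=c\,x^n(1-x)^n$ with $P=c\int_0^x u^n(1-u)^n\,du$, and the homogeneous condition $P(1)=0$ then forces $c=0$. So the homogeneous system has only the trivial solution, and the linear system is uniquely solvable.

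The one step that needs care is the divisibility: vanishing of the first $n-1$ derivatives of $Q$ at a point implies a root of multiplicity $n$ there. This is standard, via the Taylor expansion of $Q$ at that point. No real obstacle is expected.
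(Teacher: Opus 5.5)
Your proof is correct and follows the paper's argument. Both use the multiplicity-$n$ zeros of $P'$ at the two endpoints and the bound $\deg P'\le 2n$ to force $P'=C\,x^n(1-x)^n$, then fix $C=\mathrm{B}(n+1,n+1)^{-1}$ from $P(1)-P(0)=1$; your homogeneous-system remark is the paper's alternative argument that the difference of two solutions has zeros of multiplicity $n+1$ at both endpoints and therefore vanishes. (One wording slip: in your last paragraph, the root of multiplicity $n$ comes from the vanishing of $Q$ \emph{together with} its first $n-1$ derivatives, as you correctly wrote earlier.)
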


\begin{proof}
The derivative $P'$ has zeros of multiplicity at least $n$ at both endpoints
and degree at most $2n$.  Hence
$P'(x)=C x^n(1-x)^n$.  The condition $P(1)-P(0)=1$ fixes
$C=\mathrm{B}(n+1,n+1)^{-1}$, and integration gives
Eq.~\eqref{eq:beta-family}.  Equivalently, the difference of two solutions has
zeros of multiplicity $n+1$ at both endpoints but degree at most $2n+1$, so it
vanishes identically.  The data~\eqref{eq:profile-jets} are classical
two-point Hermite conditions (the ``smoothstep'' family); the proposition is
standard interpolation theory restated to fix in what sense $F_n$ is
selected.  It is a minimum-degree statement; higher-degree and
nonpolynomial profiles are not excluded.
\end{proof}

Then set
\begin{equation}
 m(r)=
 \begin{cases}
 0, & 0\leq r\leq L,\\
 M F_n(x), & L<r<\eta L,\\
 M, & r\geq\eta L,
 \end{cases}
 \label{eq:family-mass-lapse}
\end{equation}
and normalize the lapse by
\begin{equation}
 \Phi'=\frac{m}{r(r-2m)},\qquad \Phi(\infty)=0 .
 \label{eq:family-lapse}
\end{equation}
The special case $n=4$, $\eta=3$ is the degree-nine profile
$F_4(x)=126x^5-420x^6+540x^7-315x^8+70x^9$ used below for numerical
 reference values.

\begin{proposition}[Conditional uniqueness in the $p_r=0$ sector]
\label{prop:przero-uniqueness}
Let a continuous, horizon-free mass function $m(r)$ be prescribed, with a flat
cavity, a constant exterior value $M$, and enough differentiability in the
matter region to define $\rho$.  Among metrics of the form
Eq.~\eqref{eq:positive-shell-metric}, normalized by $\Phi(\infty)=0$, imposing
$p_r=0$ determines $\Phi$ uniquely by Eq.~\eqref{eq:family-lapse}.  Where the
classical source is defined, its remaining components are necessarily
\begin{equation}
 \rho=\frac{m'}{4\pi r^2},\qquad
 p_\perp=\rho\,\frac{m}{2(r-2m)} .
 \label{eq:conditional-unique-source}
\end{equation}
\end{proposition}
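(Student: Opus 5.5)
The plan is to use the standard Einstein equations for the static diagonal metric~\eqref{eq:positive-shell-metric} in its orthonormal frame. Write $e^{2\Lambda}=(1-2m/r)^{-1}$; horizon-freeness gives $r-2m>0$, so this factor is finite and positive. The $\hat t\hat t$ component gives $8\pi\rho = 2m'/r^2$, i.e.\ $\rho=m'/(4\pi r^2)$, independently of $\Phi$. The $\hat r\hat r$ component gives
\begin{equation*}
 8\pi p_r=\frac{2}{r}\Bigl(1-\frac{2m}{r}\Bigr)\Phi'-\frac{2m}{r^3}.
\end{equation*}

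Imposing $p_r=0$ and dividing by the nonvanishing factor $(1-2m/r)$ yields exactly $\Phi'=m/[r(r-2m)]$, which is Eq.~\eqref{eq:family-lapse}. The right-hand side is continuous because $m$ is continuous and $r-2m$ is bounded away from zero on compact sets. It vanishes identically in the cavity and equals $M/[r(r-2M)]$ outside, which is integrable at infinity. Hence $\Phi(r)=-\int_r^\infty m(s)\,ds/[s(s-2m(s))]$ is the unique solution with $\Phi(\infty)=0$: any two solutions differ by a constant, and the normalization kills that constant. This also shows $\Phi$ is $C^1$ across the cavity and exterior interfaces, with constant lapse in the cavity and the Schwarzschild value outside.

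For the tangential pressure I would avoid the second-derivative $\hat\theta\hat\theta$ Einstein component and use the radial conservation law (TOV) equation
\begin{equation*}
 p_r'=-(\rho+p_r)\Phi'+\frac{2}{r}(p_\perp-p_r).
\end{equation*}
This equation follows from the contracted Bianchi identity wherever the classical source is defined. With $p_r\equiv0$ it reduces to $0=-\rho\Phi'+2p_\perp/r$, so $p_\perp=r\rho\Phi'/2=\rho\, m/[2(r-2m)]$, which is Eq.~\eqref{eq:conditional-unique-source}. As a cross-check, substituting $\Phi'$ directly into $G_{\hat\theta\hat\theta}$ should reproduce the same formula.

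The main obstacle is regularity bookkeeping rather than algebra. The statement only says that $m$ has enough differentiability in the matter region, so I will state that $\rho$ needs $m\in C^1$ there and that $p_\perp$ via the Bianchi identity needs the source classically defined. This is exactly why the claim is phrased ``where the classical source is defined.'' At the interfaces $r=L,\eta L$ one should also note that the uniqueness argument needs only continuity of $\Phi'$, which holds there.
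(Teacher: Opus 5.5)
Your proposal is correct and follows essentially the same route as the paper: setting $p_r=0$ in the radial Einstein equation and imposing $\Phi(\infty)=0$ fixes $\Phi$, the temporal equation gives $\rho$, and the radial conservation law gives $p_\perp$. Your extra remarks, that $\Phi'$ is integrable at infinity and continuous across the interfaces, make explicit what the paper leaves implicit.
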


\begin{proof}
The radial Einstein equation is
\begin{equation}
 8\pi p_r=-\frac{2m}{r^3}
 +\frac{2}{r}\left(1-\frac{2m}{r}\right)\Phi'.
 \label{eq:radial-einstein-general}
\end{equation}
Setting $p_r=0$ gives Eq.~\eqref{eq:family-lapse}; normalization at infinity
removes its single additive constant.  The temporal Einstein equation gives
$\rho=m'/(4\pi r^2)$, and $\nabla_\mu T^{\mu r}=0$ then fixes $p_\perp$ as in
Eq.~\eqref{eq:conditional-unique-source}.  This characterizes the prescribed-%
$m$, zero-radial-pressure sector only.  If $p_r(r)$ is free, it is an additional
source function and uniqueness is lost unless it or a constitutive law is also
specified.
\end{proof}

These equations also make the local static-equilibrium balance transparent.  In the
unit-lapse PG sector the geometry enforces $p_r=-\rho$: the radial NEC is
saturated, while the transverse combination is locked to
$\rho+p_\perp=-r\rho'/2$.  A rising onset therefore has the wrong sign.  Here
the redshift gradient instead enters the acceleration of the static observers
and the anisotropic force balance.  For $p_r=0$, stress-energy conservation
reads
\begin{equation}
 \frac{2p_\perp}{r}=\rho\Phi',
 \label{eq:lapse-support-balance}
\end{equation}
and Eq.~\eqref{eq:family-lapse} has $\Phi'>0$ wherever $m>0$ in the
horizon-free branch.  Thus $p_\perp\geq0$ for $\rho,m\geq0$, with strict
positivity where both are nonzero, without requiring a sign condition on
$\rho'$.  This is the equilibrium relation that replaces the unit-lapse sign
lock; by itself it neither identifies a microscopic support mechanism nor
establishes dynamical stability.

\begin{theorem}[Regular all-energy-condition hollow-shell family]
\label{thm:positive-shell}
Let $y=r/L$, $\mu=M/L$, and
\begin{equation}
 q_{n,\eta}=\max_{1\leq y\leq\eta}
 \frac{F_n[(y-1)/(\eta-1)]}{y} .
 \label{eq:q-family}
\end{equation}
Equations~\eqref{eq:positive-shell-metric}--\eqref{eq:family-lapse}
define a nontrivial, asymptotically flat, horizon-free $C^n$ spacetime with a
flat cavity, an exact Schwarzschild exterior, and no thin shell whenever
\begin{equation}
 0<\mu\leq\mu_{\rm DEC}(n,\eta)
 \equiv\frac{0.4}{q_{n,\eta}} .
 \label{eq:mu-family-interval}
\end{equation}
Its Type-I source satisfies NEC, WEC, SEC, and DEC everywhere.  DEC is
saturated at the upper endpoint; replacing $\leq$ by $<$ gives a strict DEC
margin.  For $n=4$, $\eta=3$,
$q_{4,3}=0.380436713748\ldots$ and
$\mu_{\rm DEC}=1.051423234257\ldots$.
\end{theorem}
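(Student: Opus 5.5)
The plan is to read off the source from Proposition~\ref{prop:przero-uniqueness} and reduce every energy condition to a single pointwise inequality on the compactness ratio $m/r$. The family~\eqref{eq:family-mass-lapse}--\eqref{eq:family-lapse} is exactly the prescribed-$m$, $p_r=0$ sector. Hence, wherever the source is classical,
\begin{equation*}
 \rho=\frac{m'}{4\pi r^2}=\frac{M F_n'(x)}{4\pi r^2 L(\eta-1)}\ge0,\qquad p_r=0,\qquad
 p_\perp=\rho\,\frac{m}{2(r-2m)} .
\end{equation*}
The sign $\rho\ge0$ follows from Eq.~\eqref{eq:beta-family-derivative}. I would first settle the horizon condition. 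On the wall, $2m/r=2\mu F_n(x)/y\le 2\mu q_{n,\eta}\le0.8$; in the exterior, $2m/r=2\mu/y\le 2\mu/\eta$, and $1/\eta\le q_{n,\eta}$ because $F_n=1$ at $y=\eta$. So $1-2m/r\ge0.2$ everywhere. This gives horizon-freeness and also $r-2m>0$, so $p_\perp\ge0$ (consistent with Eq.~\eqref{eq:lapse-support-balance}).

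The energy conditions then follow in order:
\begin{itemize}
\item NEC and WEC: $\rho\ge0$, $\rho+p_r=\rho\ge0$, and $\rho+p_\perp\ge0$.
\item SEC: additionally $\rho+p_r+2p_\perp\ge0$.
\item DEC: $p_r=0$ and $p_\perp\ge0$, so the only nontrivial requirement is $p_\perp\le\rho$. Wherever $\rho>0$ this is $m/[2(r-2m)]\le1$, i.e. $m/r\le 2/5$. On the wall $m/r=\mu F_n(x)/y\le\mu q_{n,\eta}$, so condition~\eqref{eq:mu-family-interval} is exactly sufficient.
\end{itemize}
At $\mu=\mu_{\rm DEC}$ equality holds at the maximizer of $F_n/y$. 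That maximizer lies strictly inside the wall: $F_n/y$ vanishes at $y=1$, and near $y=\eta$ its derivative is $-1/\eta^2<0$ because $F_n'(1)=0$. Since $\rho>0$ there, DEC is saturated at the upper endpoint, and the inequality is strict for $\mu<\mu_{\rm DEC}$. Outside the wall $\rho=p_\perp=0$ and all conditions hold trivially.

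Next come the geometric claims.
\begin{itemize}
\item Cavity: $m=0$ gives $\Phi'=0$ and $g_{rr}=1$, so the cavity is flat Minkowski with a constant lapse.
\item Exterior: $m=M$, and integrating Eq.~\eqref{eq:family-lapse} with $\Phi(\infty)=0$ gives $e^{2\Phi}=1-2M/r$, an exact Schwarzschild metric.
\item Regularity: $F_n$ is polynomial on $[0,1]$ and its first $n$ derivatives vanish at both endpoints (Proposition~\ref{prop:profile-uniqueness}), so $m\in C^n$ across $r=L$ and $r=\eta L$. Hence $g_{rr}=(1-2m/r)^{-1}\in C^n$, since the denominator is bounded below. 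Also $\Phi'=m/[r(r-2m)]\in C^n$, so $\Phi\in C^{n+1}$, and the metric is $C^n$.
\item No thin shell: since $n\ge1$, $m'$ and hence $f'=(1-2m/r)'$ are continuous. The Israel jumps $[f']$ and $[\Phi']$ therefore vanish at both interfaces, so $\sigma=P_s=0$.
\item Asymptotic flatness and nontriviality: these are immediate from the exterior form and $\mu>0$.
\end{itemize}

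Finally, I would certify the numbers for $n=4$, $\eta=3$. With $y=1+2x$, maximize $g(x)=F_4(x)/(1+2x)$ on $(0,1)$. Its critical points satisfy $(1+2x)F_4'(x)=2F_4(x)$, a degree-nine polynomial equation. I would isolate its unique interior root by sign changes together with Newton iteration, using interval arithmetic for rigor. Evaluating $g$ there gives $q_{4,3}=0.3804367\ldots$, and $\mu_{\rm DEC}=0.4/q_{4,3}$.

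I expect this last step to be the main obstacle. Everything else is algebra on top of Proposition~\ref{prop:przero-uniqueness}. The numerical step requires showing the interior maximum is unique, or at least global, and quoting twelve digits rigorously, so I would rely on a validated root enclosure rather than a plotted maximum.
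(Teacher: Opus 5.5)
Your proposal is correct and follows essentially the paper's proof: reduce to the $p_r=0$ source, get NEC/WEC/SEC from $\rho\ge0$ and $p_\perp\ge0$, and reduce DEC to $2m/r\le4/5$ with $\max 2m/r=2\mu q_{n,\eta}$. Regularity, the flat cavity and the Schwarzschild exterior then follow from the vanishing endpoint jets of $F_n$, and your interior-maximizer argument for saturation and your exterior check $1/\eta\le q_{n,\eta}$ fill in steps the paper leaves implicit. One small correction: for this diagonal ansatz the Israel data are $[f]$ and $[\Phi']$, since $K_{\tau\tau}\propto\sqrt f\,\Phi'$ and $K_{\theta\theta}=r\sqrt f$, so continuity of $m$ alone already rules out a layer and $[f']$ plays no role (it enters only the PG-section formula, where the lapse is $\sqrt f$).
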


\begin{proof}
The field equations reduce exactly to
\begin{equation}
 \rho=\frac{m'}{4\pi r^2},\qquad p_r=0,\qquad
 p_\perp=\rho\,\frac{m}{2(r-2m)} .
 \label{eq:positive-shell-source}
\end{equation}
Equation~\eqref{eq:beta-family-derivative} gives $\rho\geq0$.  Hence NEC,
WEC, and SEC hold in the horizon-free branch.  The only nontrivial DEC
inequality is $p_\perp\leq\rho$, or
\begin{equation}
 \frac{2m}{r}\leq\frac45 .
 \label{eq:dec-compactness}
\end{equation}
Since
\begin{equation}
 C_\star\equiv\max_r\frac{2m(r)}r=2\mu q_{n,\eta},
 \label{eq:cstar}
\end{equation}
Eq.~\eqref{eq:mu-family-interval} is necessary and sufficient for DEC within
this nontrivial static family and also gives $C_\star\leq0.8<1$.  In
particular $1-2m/r\geq0.2$ everywhere, so the metric
\eqref{eq:positive-shell-metric} is regular and horizon-free and the
expression for $p_\perp$ in Eq.~\eqref{eq:positive-shell-source} is finite.
We do not
assert that DEC by itself excludes horizons outside this branch.  Derivatives
of $F_n$ through order $n$ vanish at both endpoints, whereas the derivative of
order $n+1$ does not.  Thus $m$ and $g_{rr}$ are globally $C^n$ but not in
general $C^{n+1}$; the lapse equation makes $\Phi$ and $g_{\tau\tau}$ one order
smoother.  In the cavity $m=0$ and $\Phi$ is constant.  In the exterior,
$e^{2\Phi}=1-2M/r$, establishing the Schwarzschild match and ADM mass $M$.
\end{proof}

\noindent
Theorem~\ref{thm:positive-shell} is the central constructive result.  The
subsections that follow interpret its source, test its robustness when
$p_r=0$ is relaxed, and connect the geometry to a microscopic matter model
and an invariant weak-field diagnostic.  These extensions clarify the
physical scope of the family without changing the assumptions of the theorem.

\subsection{Kinetic interpretation and its causal domain}

The $p_r=0$ stress pattern belongs to the Einstein-cluster/Florides family
\cite{Einstein1939,Florides1974}.  In its ideal kinetic interpretation, massive
particles occupy circular orbits whose orientations and senses are distributed
spherically, canceling the net angular momentum.  If $v_{\rm tan}$ is the speed
measured by static orthonormal observers, angular averaging gives
$p_\perp=\rho v_{\rm tan}^2/2$.  Equation~\eqref{eq:positive-shell-source}
therefore fixes
\begin{equation}
 v_{\rm tan}^2=\frac{2p_\perp}{\rho}
 =r\Phi'=\frac{m}{r-2m}.
 \label{eq:cluster-speed}
\end{equation}
Massive subluminal constituents require
\begin{equation}
 \max_r\frac{2m}{r}<\frac23,\qquad
 0<\mu<\mu_{\rm kin}(n,\eta)
 \equiv\frac{1}{3q_{n,\eta}}.
 \label{eq:kinetic-compactness}
\end{equation}
This is a stricter material-realization domain than the abstract DEC endpoint
$2m/r\leq4/5$, but it remains nonempty for every $n$ and $\eta$.  It includes
the weak-field benchmark by an enormous margin.  We claim neither a new
microscopic mechanism nor that an ordinary solid shell has $p_r=0$: collisions,
confinement fields, supports, and perturbations generally add radial stresses
and their own stress-energy.  Equation~\eqref{eq:cluster-speed} supplies a
causal source interpretation, not a stability theorem or an apparatus model;
for numerical evidence on the stability of related static Einstein--Vlasov
states see Refs.~\cite{ReinRendall2000,AndreassonRein2007}.

\subsection{An open radial-pressure neighborhood}

The exactly soluble source is not isolated in stress space.  Write
$\mathcal F(y)=F_n[(y-1)/(\eta-1)]$, $x=(y-1)/(\eta-1)$, and
$s(x)=x(1-x)$.  Keep $m$ and hence $\rho$ fixed, but prescribe
\begin{equation}
 p_r^{(\epsilon)}=\epsilon\rho s,\qquad
 \Phi_\epsilon'=
 \frac{m+4\pi r^3p_r^{(\epsilon)}}{r(r-2m)},\qquad
 \Phi_\epsilon(\infty)=0 .
 \label{eq:radial-pressure-family}
\end{equation}
The conservation equation then determines the exact tangential stress,
\begin{equation}
 p_\perp^{(\epsilon)}=p_r^{(\epsilon)}
 +\frac r2\left[(p_r^{(\epsilon)})'
 +(\rho+p_r^{(\epsilon)})\Phi_\epsilon'\right].
 \label{eq:radial-pressure-tangential}
\end{equation}

\begin{theorem}[Robustness to nonzero radial pressure]
\label{thm:radial-pressure-robustness}
For every $n\geq1$, $\eta>1$, and strict interior point
$0<\mu<0.4/q_{n,\eta}$, there is an $\epsilon_\star>0$ such that every
$|\epsilon|<\epsilon_\star$ in
Eqs.~\eqref{eq:radial-pressure-family}--\eqref{eq:radial-pressure-tangential}
defines a horizon-free, junction-free hollow shell satisfying NEC, WEC, SEC,
and DEC everywhere.  Every nonzero $\epsilon$ gives $p_r\ne0$ in the open
matter wall.  The cavity and exterior remain respectively flat and exactly
Schwarzschild, with the same ADM mass.
\end{theorem}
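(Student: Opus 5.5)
The argument is a perturbation of the exact $\epsilon=0$ family of Theorem~\ref{thm:positive-shell}. At $\epsilon=0$ the source has strict margins in the open wall whenever $\mu<0.4/q_{n,\eta}$. The plan is to show that the $\epsilon$-corrections to every energy-condition combination are bounded by $|\epsilon|$ times quantities that vanish at the wall endpoints no faster than the unperturbed margins do. The claims about the cavity and the exterior are immediate. Outside the wall $\rho=0$, so $p_r^{(\epsilon)}=0$ and $\Phi_\epsilon'=\Phi'$. Hence the cavity is still flat with constant lapse, and the exterior is still $e^{2\Phi}=1-2M/r$ with the same $m$ and ADM mass. Horizon-freeness is unchanged because $m$ is unchanged, which gives $1-2m/r\ge0.2$. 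Since $s>0$ on $(0,1)$ and $\rho>0$ there, $p_r\neq0$ in the open wall for every $\epsilon\ne0$. Regularity needs $\rho s$ to be $C^1$ with vanishing boundary data, which holds for $n\ge1$ because $\rho\propto x^n(1-x)^n$. Thus no thin shell appears, since $p_r$ and $p_r'$ are continuous across $r=L,\eta L$.

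Next I would write the stresses relative to $\rho$. Substituting Eq.~\eqref{eq:radial-pressure-family} into Eq.~\eqref{eq:radial-pressure-tangential} gives
\[
p_\perp^{(\epsilon)}=\rho\,\frac{m}{2(r-2m)}+\epsilon\,\rho\,A(r)+\epsilon\,\frac r2(\rho s)'+O(\epsilon^2)\rho ,
\]
where $A$ collects $s$, $s\,r\Phi'$ and $4\pi r^3\rho s/(r-2m)$. All of these are bounded on the wall. The apparently dangerous term is $\tfrac r2(\rho s)'$, since $\rho'/\rho$ blows up at the endpoints. I would control it with $\rho\propto x^n(1-x)^n/y^2$, which gives $(\rho s)'=\rho\,[\,s'+s\rho'/\rho\,]$. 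Now $s\rho'/\rho=s\,[n(1-2x)/(x(1-x))/(L(\eta-1))-2/r]$ is bounded, because the factor $s=x(1-x)$ cancels the poles. Hence $(\rho s)'=\rho\,B(r)$ with $B$ bounded on $[L,\eta L]$. So every stress has the form $\rho\times(\text{bounded function of } r,\epsilon)$, continuous in $\epsilon$ uniformly in $r$.

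With this factorization each energy-condition combination equals $\rho$ times a ratio $c(r,\epsilon)$. The ratios are: $\rho+p_r=\rho(1+\epsilon s)$, $\rho+p_\perp$, $\rho+p_r+2p_\perp$, $\rho-|p_r|$, and $\rho-|p_\perp|$. At $\epsilon=0$ these ratios are bounded below on the compact wall by positive constants. For $\rho+p_\perp$ and the SEC combination the bound is at least $1$. The DEC ratio is $1-m/(2(r-2m))$, which is at least $\delta>0$ because $\max 2m/r<0.8$ strictly. Since $\sup_r|c(r,\epsilon)-c(r,0)|\le K|\epsilon|$, choosing $\epsilon_\star$ small makes all ratios positive. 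Multiplying by $\rho\ge0$ gives NEC, WEC, SEC and DEC everywhere.

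The main obstacle is the uniform bound in the second step near $x=0,1$. There $p_r'$ involves $\rho'$, which need not be dominated by $\rho$ for a generic profile. The choice $s=x(1-x)$ is exactly what cures this, so I would verify the cancellation explicitly. I would also check that the $O(\epsilon)$ change in $\Phi'$ is uniformly bounded, which it is because $r-2m\ge0.2r$.
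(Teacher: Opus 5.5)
Your proposal is correct and follows essentially the paper's route: you write every stress as $\rho$ times a bounded wall function, with the factor $s=x(1-x)$ cancelling the endpoint poles of $\rho'/\rho$, and then perturb off the strict $\epsilon=0$ margins (using the interior DEC bound $\max 2m/r<0.8$). The paper does the same but makes your $K|\epsilon|$ bound explicit through the exact ratio $b_\epsilon=h_0/2+\epsilon U+\epsilon^2 V$, and states closed-form inequalities that give a constructive lower bound for $\epsilon_\star$.
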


\begin{proof}
It is useful to display a constructive uniform bound.  A prime on $\mathcal F$
below denotes $d/dy$.  Define the continuous wall functions
\begin{align}
 h_0&=\frac{\mu\mathcal F}{y-2\mu\mathcal F},&
 h_1&=\frac{\mu y\mathcal F's}{y-2\mu\mathcal F},\nonumber\\
 D&=\frac{y(n+1)(1-2x)}{\eta-1}-2s,\nonumber\\
 U&=s+\frac D2+\frac{s h_0+h_1}{2},&
 V&=\frac{s h_1}{2}.
 \label{eq:robustness-functions}
\end{align}
The endpoint limits are finite because the extra factor $s$ cancels the
logarithmic derivative of $\rho$; hence $h_0$, $h_1$, $U$, and $V$ extend
continuously to the closed wall $1\leq y\leq\eta$ and the maxima used below
are attained.  In the matter wall the pressure ratios are
exactly
\begin{equation}
 a_\epsilon\equiv\frac{p_r^{(\epsilon)}}\rho=\epsilon s,\qquad
 b_\epsilon\equiv\frac{p_\perp^{(\epsilon)}}\rho
 =\frac{h_0}{2}+\epsilon U+\epsilon^2V .
 \label{eq:robustness-ratios}
\end{equation}
Let $b_{\max}=\max(h_0/2)<1$,
$\mathcal A=\max|U|$, and $\mathcal B=\max V$.  The strict inequality follows
from the assumed interior DEC margin.  Any $e>0$ satisfying
\begin{align}
 e&<4,&
 \mathcal A e+\mathcal B e^2&<1-b_{\max},\nonumber\\
 \frac e4+2(\mathcal A e+\mathcal B e^2)&<1
 \label{eq:epsilon-witness}
\end{align}
is a valid lower witness for $\epsilon_\star$.  Indeed, $0\leq s\leq1/4$
gives $|a_\epsilon|<1$; the second inequality gives
$|b_\epsilon|<1$; and the last gives
$1+a_\epsilon+2b_\epsilon>0$.  Together with $\rho\geq0$, these are DEC and
SEC, and imply WEC and NEC.  Such an $e$ exists because all three right-hand
margins are positive.  The witness depends on $(n,\eta,\mu)$ through
$b_{\max}$, $\mathcal A$, and $\mathcal B$; as $\mu\to0.4/q_{n,\eta}$ the
margin $1-b_{\max}$ closes and $\epsilon_\star\to0$.  The mass function is
unchanged, so the horizon margin is unchanged.  Finally, $p_r^{(\epsilon)}$ and its first derivative vanish at
both interfaces for $n\geq1$; Eq.~\eqref{eq:radial-pressure-family} therefore
has the same bilateral lapse data used in the junction proof below.
\end{proof}

The geometric contribution is thus not a claim that zero radial pressure is
the only possible matter.  Propositions~\ref{prop:profile-uniqueness} and
\ref{prop:przero-uniqueness} identify what is unique under explicit minimal
assumptions, while Theorem~\ref{thm:radial-pressure-robustness} shows that the
positive-energy conclusion persists after one of those assumptions is relaxed.

The constitutive Einstein--Vlasov calculation is supplied as Supplemental
Material for readers who want a matter-level realization with nonzero radial
pressure.  It is a reproducibility and interpretation aid, not an additional
existence or stability result on which the main theorem depends.

\subsection{Darmois--Israel junction conditions}

Absence of a thin shell follows directly, rather than from visual smoothness.
For $f=1-2m/r$ and one normal $n^\mu=\sqrt f\,\delta^\mu_r$, the nonzero
extrinsic-curvature components of an $r={\rm const}$ worldtube are
\begin{align}
 K_{\tau\tau}&=-e^{2\Phi}\sqrt f\,\Phi',&
 K_{\theta\theta}&=r\sqrt f,\nonumber\\
 K_{\phi\phi}&=\sin^2\theta\,K_{\theta\theta}.&&
 \label{eq:family-extrinsic-curvature}
\end{align}
At $r=L$, both sides have $m=0$ and $\Phi'=0$; at $r=\eta L$, both sides
have $m=M$ and the same Schwarzschild value of $\Phi'$.  The induced metric
and every component in Eq.~\eqref{eq:family-extrinsic-curvature} therefore
match exactly: $[h_{ab}]=[K_{ab}]=0$ and the Israel tensor $S_{ab}$ vanishes.
Continuity of $m$ and the bilateral lapse closure are already sufficient for
this result; continuity of $m'$ is not required, so absence of an Israel layer
must not be conflated with continuity of the classical stress.  Within the
family stated here, $n=1$ is the minimum that also makes
$F_n'(0)=F_n'(1)=0$ and hence makes the stresses continuous and zero at both
interfaces.  Larger $n$ controls higher regularity but is not being used to
conceal a junction layer.  Together with Eq.~\eqref{eq:israel-onset}, this establishes both sides of
the boundary comparison: the positive family has no surface matter, whereas
the sharp unit-lapse PG onset has an unavoidable negative tangential layer.

\subsection{Invariant core depth and fixed-ADM comparison}

The constant cavity lapse is relationally observable once time is normalized
at infinity.  Write
$\mathcal F_{n,\eta}(y)=F_n[(y-1)/(\eta-1)]$.  Direct integration gives the
exact dimensionless map
\begin{align}
 \Phi_c(\mu;n,\eta)
 &=\frac12\ln\!\left(1-\frac{2\mu}{\eta}\right) \nonumber\\
 &\quad-\int_1^\eta
 \frac{\mu\mathcal F_{n,\eta}(y)}
 {y[y-2\mu\mathcal F_{n,\eta}(y)]}\,dy,
 \label{eq:exact-core-depth}\\
 -\Phi_c&=\kappa_{n,\eta}\mu+O(\mu^2),
 \label{eq:weak-core-depth}\\
 \kappa_{n,\eta}&=\frac1\eta+
 \int_1^\eta\frac{\mathcal F_{n,\eta}(y)}{y^2}\,dy .
 \label{eq:kappa-family}
\end{align}
For the $n=4$, $\eta=3$ reference,
\begin{equation}
 \kappa_{4,3}=-\frac{6975}{64}+\frac{25515}{256}\ln3
 =0.512080255339\ldots .
 \label{eq:kappa-s9}
\end{equation}
This replaces an arbitrary strong-field reference point by a continuous
weak-to-strong benchmark.

More importantly, the exterior field can be held fixed.  Compare two
homologous shells with the same ADM mass and $L_a<L_b$, keeping $n$ and $\eta$
fixed.  Restoring units with $M_g=GM_{\rm ADM}/c^2$,
\begin{align}
 \ln\frac{\nu_c(L_b)}{\nu_c(L_a)}
 &=\Phi_c(M_g/L_b)-\Phi_c(M_g/L_a) \nonumber\\
 &=\kappa_{n,\eta}\frac{GM_{\rm ADM}}{c^2}
 \left(\frac1{L_a}-\frac1{L_b}\right)+O(M_g^2/L^2).
 \label{eq:fixed-adm-rate}
\end{align}
Because the Schwarzschild exterior is identical in both configurations, this
is a geometry-sensitive fixed-ADM null test, not merely detection of an
ordinary change in total mass.  By Birkhoff's theorem no exterior measurement
distinguishes the two configurations; the entire signal is the cavity
potential, whose profile dependence enters through the order-unity
coefficient $\kappa_{n,\eta}$ (a thin shell at radius $L$ has $\kappa=1$ in
the same normalization).  The comparison detects an internal redistribution
of gravitating matter at fixed exterior field.  It is not by itself a unique
signature of hollow topology; that interpretation requires the source and
geometry controls stated above.  An ideal stationary matter-wave branch of
particle mass $m_a$ and dwell time $T$ has the corresponding leading proper-time
term
\begin{equation}
 \Delta\varphi_{\rm dwell}=\kappa_{n,\eta}
 \frac{G M_{\rm ADM}m_aT}{\hbar}
 \left(\frac1{L_a}-\frac1{L_b}\right).
 \label{eq:ideal-phase-benchmark}
\end{equation}
Equation~\eqref{eq:ideal-phase-benchmark} is a covariant phase contribution,
not a complete interferometer prediction.  It is included only as a relational
interpretation of the exact geometry; instrument thresholds and source
engineering are outside the paper.

The fixed-ADM comparison is therefore a geometric benchmark, not a claim of
experimental sensitivity or a microscopic scaling limit.  The corresponding
engineering extrapolations remain in the separate protocol and roadmap.

\subsection{Flat-slice representation}

The construction relaxes only the lapse, not the possibility of flat spatial
slices.  Write
\begin{equation}
 A(r)=e^{\Phi(r)},\qquad
 C(r)=\left(1-\frac{2m(r)}r\right)^{-1/2},
\end{equation}
and introduce a new time coordinate through
\begin{equation}
 \tau=t+h(r),\qquad
 h'(r)=\frac{\sqrt{C(r)^2-1}}{A(r)} .
 \label{eq:reslicing}
\end{equation}
Then Eq.~\eqref{eq:positive-shell-metric} becomes
\begin{equation}
 ds^2=-\alpha(r)^2dt^2+[dr-\beta(r)dt]^2+r^2d\Omega^2,
 \label{eq:flat-slice-escape}
\end{equation}
with
\begin{equation}
 \alpha=AC,\qquad
 \beta=A\sqrt{C^2-1}.
 \label{eq:escape-adm-fields}
\end{equation}
The spatial metric on $t=\mathrm{const}$ is exactly Euclidean.  In the cavity
$C=1$, so $\beta=0$ and $\alpha=e^{\Phi_c}$ is constant.  In the exterior,
 $A=C^{-1}$, hence $\alpha=1$ and $\beta=\sqrt{2M/r}$, the Schwarzschild PG
 slicing.  The nontrivial lapse is confined to the distinction between the
 asymptotic time normalization and the cavity.  Equations
\eqref{eq:flat-slice-escape}--\eqref{eq:escape-adm-fields} are therefore an
explicit counterexample to the idea that unit lapse is an inessential gauge
 restriction, obtained by relaxing only that assumption.  It does not contradict
 Theorem~\ref{thm:curvature-only}: its flat-slice representation has $B\equiv1$.
Although the geometry now has Euclidean spatial slices, the normalized norm of
the static Killing field---and hence the cavity-to-infinity clock ratio
$e^{\Phi_c}$ of Eq.~\eqref{eq:exact-core-depth}---is unchanged.  This factor
coincides with the flat-slice ADM lapse $\alpha=e^\Phi C$ in the cavity because
$C=1$ there; it should not be identified with $\alpha(r)$ throughout the
re-sliced spacetime.  Flat spatial geometry therefore does not remove the
physical redshift that labels the family.

\section{Relation to earlier work and scope}
\label{sec:scope}

The present paper continues the authors' earlier spherical-source program, but
changes the question being answered.  The 2023 study mapped numerical spherical
solutions with partial energy-condition control \cite{AbellanBolivarVasilev2023}.
The 2024 work explored nontrivial lapse, anisotropy, and heat flow
\cite{AbellanBolivarVasilev2024}.  The 2025 piecewise unit-lapse PG construction
obtained nonnegative bulk margins by allowing non-differentiable interfaces
\cite{BolivarAbellanVasilev2025}.  Here the interface itself is the object of
 study.  A sharp positive onset carries the negative Israel pressure
 \eqref{eq:israel-onset}, while any regular onset has the finite integrated
 budget of Eq.~\eqref{eq:onsetbudget}.  Thus the earlier bulk construction and
 its interior margins are not discarded; the present result identifies the
 boundary price that must be included when those interfaces are read as physical
 junctions.  Equation~\eqref{eq:israel-onset} is a precise boundary benchmark
 for the 2025 construction: its nonnegative bulk margins hold away from the
 interface, whereas a pointwise bulk test does not assess the associated
 junction layer.  Proposition~\ref{prop:pg-israel-association} now proves that
 every admissible smooth PG regularization has the same singular Einstein
 measure as the diagonal Israel completion.  The square-root limiting metric
 itself remains below the Geroch--Traschen regularity class, so this association
 must not be restated as existence of its classical distributional Riemann
 tensor.

\begin{table}[t]
\caption{Progression from the authors' earlier spherical models to the present
result.  ``All EC'' denotes NEC, WEC, SEC, and DEC.}
\label{tab:own-work-comparison}
\footnotesize
\begin{tabular}{@{}p{0.12\columnwidth}p{0.40\columnwidth}p{0.40\columnwidth}@{}}
\toprule
Work & Sector and EC result & Boundary control and distinct role \\
\midrule
2023 \cite{AbellanBolivarVasilev2023} & Numerical spherical fluids; partial EC control & Numerical exploration; no boundary theorem \\
2024 \cite{AbellanBolivarVasilev2024} & Nontrivial lapse, anisotropy, heat flow; broader positive-energy sectors & Opens source freedom; no minimal exact hollow junction theorem \\
2025 \cite{BolivarAbellanVasilev2025} & Unit-lapse radial PG; nonnegative bulk WEC/NEC margins & Non-$C^1$ onset; surface cost not included \\
Present & Unit-lapse obstruction, curvature-only closure, and regular $F_n$ lapse family; open $p_r\ne0$ neighborhood & Regular family has $S_{ab}=0$; the sharp PG onset is associated with the negative Israel layer; fixed-ADM core-depth comparison \\
\bottomrule
\end{tabular}
\end{table}

Several ingredients are standard and provide useful physical orientation.  The
pressure identity $p_r=-\rho$, $p_\perp=-\rho-r\rho'/2$ and its WEC
monotonicity are established in the spherical vacuum-dark-fluid literature
\cite{Dymnikova1992,Bronnikov2012}.  Einstein-cluster matter, gravastar-type
hollow interiors \cite{MazurMottola2004,VisserWiltshire2004}, and the
anisotropy requirement of Ref.~\cite{CattoenFaberVisser2005} all point to the
same lesson: a hollow wall is controlled by its boundary stresses, not only by
its bulk density.  Static Einstein--Vlasov shells and thin-shell vacuum-bubble
models already show that regular hollow sources satisfying pointwise energy
conditions can exist \cite{Rein1994,Rein1999Shell,AndreassonRein2007,
HorvatIlijic2007,RosaPiccarra2020,AcunaCardenas2024}, while broader
warp-drive classifications, source-first studies, and recent Alcubierre--
Minkowski junction analyses treat larger geometric classes
\cite{Barzegar2026,Le2026,SantosPereira2026}.  Our contribution within static
spherical symmetry is a linked package: a quantified PG-class onset cost and
its regularized Israel counterpart, a curvature-only closure result at unit
lapse, and an explicit positive family with regular junctions obtained by
relaxing the lapse.  For compactness context, the DEC endpoint $2m/r\leq4/5$ is the exact
pointwise DEC boundary of the $p_r=0$ sector, by
Eq.~\eqref{eq:conditional-unique-source}; it is not an instance of the sharp
bound $2m/r\leq8/9$ of
Refs.~\cite{Andreasson2008,KarageorgisStalker2008}, whose hypothesis
$p_r+2p_\perp\leq\rho$ fails at DEC saturation, where $p_r+2p_\perp=2\rho$.
Within that hypothesis class our source obeys the stricter causal bound
$2m/r<2/3$ of Eq.~\eqref{eq:kinetic-compactness}, since
$2p_\perp\leq\rho$ is equivalent to $v_{\rm tan}^2\leq1$ there.

All results remain restricted to static spherical symmetry.  The kinetic
Einstein cluster is an ideal cold limit, and the open radial-pressure theorem
shows only continuity in stress space; it is not a material support or
stability result.  The constitutive Einstein--Vlasov calculation is supplied
separately as a prior-art microscopic closure, not as part of the present
novelty claim.  Material supports, dynamical formation, perturbative stability,
and nonspherical deformations remain open.  Finally, the shift in
Eq.~\eqref{eq:flat-slice-escape} is a slicing field rather than a transport
velocity; dynamical transport is a distinct question.

\section{Conclusion}

The physical lesson is simple.  In the unit-lapse, flat-slice PG class, a
positive wall cannot rise out of an empty core without paying a transverse
null-energy cost.  Smoothing the onset spreads that cost through the wall;
admissible smooth PG regularizations of the sharp limit converge to the same
negative Israel surface pressure $P_s=-R\rho_+/2$.  The raw square-root chart
remains below the classical distributional-curvature threshold, but its
Einstein measure is unambiguous within the stated regularization class.
Allowing radial spatial curvature while keeping unit lapse
also fails, at least when the areal radius is monotone: the radial NEC makes
$B$ monotone, and the flat endpoint conditions force $B\equiv1$.

Within the static spherical comparison sequence considered here, lapse freedom
is sufficient, whereas radial spatial curvature alone is not.
Once the lapse is allowed to vary, the incomplete-beta family gives regular
hollow shells with Minkowski
cavities, Schwarzschild exteriors, no thin shell, and NEC, WEC, SEC, and DEC on
the analytic interval $0<\mu\leq0.4/q_{n,\eta}$.  The same spacetimes can be
written with Euclidean spatial slices after re-slicing, so the construction does
not rely on hidden spatial curvature.  What changes physically is the redshift
between the cavity and infinity.  The minimum-degree profile characterization
and the $p_r=0$ lapse reconstruction explain why this particular family is
natural, without making a uniqueness claim over all possible anisotropic matter.

The source extensions show that this result is not tied to an isolated stress
tensor.  The Einstein-cluster interpretation is causal on the stricter branch
$2m/r<2/3$, and the open $p_r\ne0$ neighborhood preserves the energy
conditions.  The invariant core-depth map then gives the family a direct
weak-field meaning: at fixed ADM mass the exterior Schwarzschild field is
unchanged while the cavity clock rate varies according to
Eq.~\eqref{eq:fixed-adm-rate}.  The separate engineering roadmap develops the
associated density--signal tradeoff; no laboratory sensitivity is inferred
from the present equation.

The paper thus establishes a boundary-complete comparison: the unit-lapse
onset obstruction, its regularized Israel measure, the curvature-only closure,
and the lapse-only positive family.  We do not claim a new general matter
model, a new existence theorem for hollow shells, a stability theorem, or a
transport result.  Material support, formation, nonspherical evolution, and
dynamical transport are separate questions; the present family is a controlled
static source geometry, not a laboratory proposal or a warp drive.

\section*{Statements and Declarations}

\textbf{Funding.} No external funding was received for this work.

\textbf{Competing interests.} The authors declare no competing financial or
non-financial interests directly related to this work.

\section*{Code and data availability}

{\raggedright
\textbf{Code availability.} The symbolic and numerical results reported here
can be reproduced with
the scripts \path{verify_obstruction.py},
\path{verify_pg_israel_covariant_regularization.py},
\path{verify_curvature_only_no_go.py},
\path{verify_lapse_only_escape.py},
\path{verify_profile_family_junction_benchmark.py}, and
\path{verify_vlasov_constitutive_branch.py}.  The corresponding result files are
\path{verification.json},
\path{pg_israel_covariant_regularization.json},
\path{profile_family_junction_benchmark.json}, and
\path{vlasov_constitutive_branch.json}.  These scripts and result files are
provided as Supplemental Material.

\textbf{Data availability.} The machine-readable outputs and grayscale
figures, together with the source files needed to interpret them, are
available in the public archive at
\href{https://doi.org/10.5281/zenodo.21923546}{doi:10.5281/zenodo.21923546}
and mirrored at
\href{https://github.com/AstrumDrive/hollow-core-energy-conditions-reproducibility}
{github.com/AstrumDrive/hollow-core-energy-conditions-reproducibility}.\par}

\bibliographystyle{apsrev4-2}
\bibliography{references}

\end{document}